\newtheorem{theorem}{Theorem}%[section]
\newtheorem{lemma}[theorem]{Lemma}
\newtheorem{proposition}[theorem]{Proposition}
\newtheorem{definition}[theorem]{Definition}
\newtheorem{remark}{Remark}
\newcommand{\myqed}{\hfill $\blacktriangle$%
                   }
\newcommand{\N}{\mathbb N}
\newcommand{\sml}[1]{\mathsmaller{#1}}
\newcommand{\deriv}[2]{ \dot{#1}_{\mathsmaller{#2}} }
\newcommand{\derivsec}[2]{ \ddot{#1}_{\mathsmaller{#2}} }
\begin{document}

\title{Zero-Error Capacity of a Class of Timing Channels}%

\author{
        Mladen~Kova\v cevi\' c,~\IEEEmembership{Student Member,~IEEE,}
        and~Petar~Popovski,~\IEEEmembership{Senior Member,~IEEE}%
\thanks{Date: \today.
%         Manuscript received November 05, 2013;
%         revised May 06, 2014, and August 08, 2014;
%         accepted August 24, 2014.
        }%
\thanks{M. Kova\v{c}evi\'c is with the Department of Electrical Engineering,
        University of Novi Sad, Serbia (e-mail: kmladen@uns.ac.rs).}%
\thanks{P. Popovski is with the Department of Electronic Systems,
        Aalborg University, Denmark (e-mail: petarp@es.aau.dk).}%
\thanks{M. Kova\v{c}evi\'c was supported by the Ministry of Education,
        Science and Technological Development of the Republic of Serbia 
        (grants TR32040 and III44003).
        P. Popovski was partially supported by the Danish Council for
        Independent Research within the Sapere Aude Research Leader program,
        Grant No. 11-105159, "Dependable wireless bits for M2M communication".
        Part of the work was done while M. Kova\v{c}evi\'c was visiting
        Aalborg University, Denmark, under the support of the EU COST Action IC1104.}%
% \thanks{Copyright \copyright \ 2014 IEEE. Personal use of this material is permitted.
%         However, permission to use this material for any other purposes must be
%         obtained from the IEEE by sending a request to pubs-permissions@ieee.org.}%
}%

% \markboth{IEEE Transactions on Information Theory}%
% {Kova\v{c}evi\'{c} and Popovski: Zero-Error Capacity of a Class of Timing Channels}

\maketitle

\begin{abstract}
We analyze the problem of zero-error communication through timing channels
that can be interpreted as discrete-time queues with bounded waiting times.
The channel model includes the following assumptions:
\begin{inparaenum}[1)]
\item
 Time is slotted,
\item
 at most $ N $ ``particles'' are sent in each time slot,
\item
 every particle is delayed in the channel for a number of slots chosen randomly
 from the set $ \{0, 1, \ldots, K\} $, and
\item
 the particles are identical.
\end{inparaenum}
It is shown that the zero-error capacity of this channel is $ \log r $, where
$ r $ is the unique positive real root of the polynomial $ x^{K+1} - x^{K} - N $.
Capacity-achieving codes are explicitly constructed, and a linear-time decoding
algorithm for these codes devised.
In the particular case $ N = 1 $, $ K = 1 $, the capacity is equal to
$ \log \phi $, where $ \phi = (1 + \sqrt{5}) / 2 $ is the golden ratio,
and the constructed codes give another interpretation of the Fibonacci sequence.
\end{abstract}%
\begin{IEEEkeywords}%
Zero-error capacity, zero-error code, timing channel, timing capacity,
molecular communications, discrete-time queue, Fibonacci sequence.
\end{IEEEkeywords}%

\section{Preliminaries}

\IEEEPARstart{T}{he} study of timing channels, i.e., channels that arise when
the information is being encoded in the transmission times of messages, has
resulted in many interesting and relevant models.
Two important and relatively recent examples are the models adopted from queuing
theory \cite{anantharam, bedekar, thomas} and those that arise in molecular
communications \cite{kadloor}.
We analyze here the problem of \emph{zero-error} communication over certain
channels of this type.
The study is partly motivated by settings in which the communication is
done with rather unconventional physical carriers, such as particles, molecules,
items, etc.
These channels can also be viewed as discrete-time queues with bounded waiting
times, and the results can thus be seen as supplementing in a sense the work
carried out in \cite{bedekar,thomas} (see also \cite{prabhakar, nakibly});
however, due to the combinatorial nature of zero-error information theory
\cite{shannon, korner}, the methods used are quite different from those in
\cite{bedekar,thomas}.

\subsection{The channel model}
\label{model}

We assume that multiple transmissions can occur at the same time instant
without interfering with each other.
In this regard, we will use the term \emph{particle} (instead of \emph{symbol}
or \emph{packet}) for the unit of transmission.
We believe that this convention will make the discussion clearer.

Let $ \N $ denote the set of nonnegative integers $ \{ 0, 1, \ldots \} $.
\begin{definition}
The Discrete-Time Particle Channel with parameters $ N, K \in \N $, denoted
DTPC($ N, K $), is the communication channel described by the following assumptions:
\begin{enumerate}
\itemsep0em
  \item Time is slotted, meaning that the particles are sent and received in
        integer time instants;
  \item At most $ N $ particles are sent in each time slot;
  \item Every particle is delayed in the channel for a number of slots chosen
        randomly from the set $ \{0, 1, \ldots, K\} $;
  \item The particles are indistinguishable, and hence the information
        is conveyed via timing only, or equivalently, via the number of
        particles in each slot.   \myqed
\end{enumerate}
\end{definition}

\begin{figure}[h]
 \centering
  \includegraphics[width=\columnwidth]{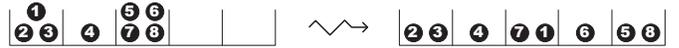}
\caption{Illustration of the DTPC($ 4, 2 $). The particles are numbered only
         for the purpose of illustration, they are assumed identical.}
\label{fig:channel}
\end{figure}%

We elaborate briefly on the definition of the DTPC.
If the duration of the transmission is $ n $ slots, then the assumption 4)
implies that the sequence of particles can be identified with an $ n $-tuple
of integers $ (x_1, \ldots, x_n) \in \{ 0, 1, \ldots, N \}^n $, where
$ x_i $ represents the number of particles in the $ i $'th slot.
For example, Figure \ref{fig:channel} illustrates a situation where the
transmitted sequence is $ ( 3, 1, 4, 0, 0 ) $ and the received sequence is
$ ( 2, 1, 2, 1, 2 ) $.
Hence, the DTPC can be defined purely in terms of sequences of nonnegative
integers, and in the rest of the paper we will rely entirely on this representation.

As for the assumption 3), observe that if the delays of the particles were unbounded
(as is the case, e.g., in queues having service times with geometric distribution
\cite{bedekar}), the zero-error capacity would be zero.
Therefore, in order to obtain interesting models, some restrictions on the delays
have to be imposed.
Similarly, if there is no restriction on the number of particles sent in each
slot, then the zero-error capacity is infinite for any $ K \in \N $, which justifies
the assumption 2).

Note that we have not imposed a restriction on the number of particles at the
output of the DTPC($ N, K $) in a single slot (though it is obviously bounded
by $ (K+1)N $).
It is not hard to argue that this does not affect the zero-error capacity of
the above channel, i.e., it would be the same if this number were also bounded
by $ N $.
This is proven in Appendix \ref{app_Nparticlesreceived}.

Let us also give several more concrete interpretations of the DTPC.
Namely, the ``particles" referred to in the definition of this channel can
be interpreted in various ways depending on the context, e.g., as:

\begin{itemize}
\item
 ``Molecules" in the so-called molecular communications, where the transmission
of information via the number of molecules and their emission times is considered.
The molecules are usually assumed identical, and their arrival times are random
due to their interaction with the fluid medium.
The codes described in the present paper are relevant precisely for the channels
of this type, at least in discrete-time models \cite{kadloor}.
\item
 ``Customers" in queuing systems, an important example of which are queues of
``packets" formed in network routers (see the discussion in Remark \ref{rem:queues}
below).
\item
 ``Packets'' in channels introducing random delays (caused by effects different
than queuing).
Note that the packets referred to in this and the previous paragraph are not
identical in practice and usually carry information via their contents.
In this paper we will be interested in the transmission of information via
\emph{timing only}, similarly as in \cite{bedekar}.
Alternatively, one can imagine a receiver that is not processing the packets
(e.g., a low power node in a wireless sensor network), but only infers their
arrivals through energy detection.
\item
 ``Energy quanta" in a simultaneous transmission of energy and information \cite{petar2}.
\end{itemize}

\begin{remark}[DTPC vs.\ Discrete-Time Queues]
\label{rem:queues}
We have pointed out already that the results of the paper apply also to
queuing systems of certain type.
We introduce them here in a bit more detail.
Denote by DTQR($ N, K $) the Discrete-Time Queue%
\footnote{\,Service procedure, service time distribution, and interarrival
distribution are irrelevant in this context and hence are not specified.}
with $ N $ servers/processors (meaning that $ N $ particles can be processed
simultaneously), with at most $ N $ arrivals per slot, and with Residence
times bounded by $ K $ slots (the residence time of the particle is the total
time that it spends in the queue, either waiting to be processed or being processed).
It is not difficult to argue that the DTPC($ N, K $) and the DTQR($ N, K $)
have identical zero-error codes and zero-error capacities.
The key difference between these channels is that the delays of the particles
in the DTPC are independent, while in the DTQR they are not as they are affected
also by the service procedure (for example, in FIFO queues the particles cannot
be reordered).
The assumption that the particles are identical, however, makes this difference
irrelevant in the zero-error case.
\myqed
\end{remark}

\subsection{Notation and definitions}
\label{notation}

By a ``sequence'' of length $ n $ over a nonempty alphabet $ \mathbb{A} $ we
mean an $ n $-tuple from $ \mathbb{A}^n $.
When there is no risk of confusion, a sequence $ (x_1, \ldots, x_n) $ will also
be written as $ x_1 \cdots x_n $.
If, for a given channel, the sequence $ \bf x $ at its input can produce the
sequence $ \bf y $ at its output with nonzero probability, then we write
$ {\bf x} \rightsquigarrow {\bf y} $.
For any two sequences $ \bf x $ and $ \bf y $, their concatenation is denoted
by $ {\bf x} \circ {\bf y} $, or sometimes simply by $ {\bf x}\,{\bf y} $.
Also, if $ Z $ is a set of sequences, we let
$ {\bf x} \circ Z = \big\{ {\bf x} \circ {\bf z} : {\bf z} \in Z \big\} $
and $ Z \circ {\bf x} = \big\{ {\bf z} \circ {\bf x} : {\bf z} \in Z \big\} $.
We assume that $ {\bf x} \circ \emptyset = \emptyset \circ {\bf x} = \emptyset $,
and $ {\bf x} \circ \varnothing = \varnothing \circ {\bf x} = {\bf x} $,
where $ \emptyset $ denotes an empty set and $ \varnothing $ an empty sequence.
For a sequence $ \bf x $ and a number $ k \in \N $, $ {\bf x}^k $ will
denote the concatenation of $ k $ copies of $ \bf x $, where it is assumed that
$ {\bf x}^0 = \varnothing $.
The weight of a sequence $ {\bf x} = x_1 \cdots x_n $, $ x_i \in \N $, is defined
as $ wt({\bf x}) = \sum_{i=1}^n x_i $.

A code of length $ n $ for the DTPC($ N, K $) is a subset of $ \{ 0, 1, \ldots, N \}^n $.
Codes will be denoted by calligraphic letters $ {\mathcal C}, {\mathcal D} $,
etc., or $ {\mathcal C}(n), {\mathcal D}(n) $, if their length needs to be
emphasized.
The set of codewords of $ {\mathcal C} $ having prefix $ \bf u $ is denoted by
$ {\mathcal C}^{\bf u} $, and the code obtained by removing this prefix by
$ {\mathcal C}^{\underline{\bf u}} = \{ {\bf v} : {\bf u} \circ {\bf v} \in {\mathcal C} \} $.
Clearly, $ {\mathcal C}^{\bf u} = {\bf u} \circ {\mathcal C}^{\underline{\bf u}} $.

\begin{definition}
\label{def_zeroerror}
  $ \mathcal C $ is said to be a zero-error code for the DTPC if for
any $ m \geq 1 $ and any two distinct sequences
$ {\bf x} = {\bf x}_1 \cdots {\bf x}_m $ and
$ {\bf y} = {\bf y}_1 \cdots {\bf y}_m $,
where $ {\bf x}_i, {\bf y}_i \in {\mathcal C} $,
there exists no sequence $ {\bf z} $ such that both
$ {\bf x} \rightsquigarrow {\bf z} $ and $ {\bf y} \rightsquigarrow {\bf z} $.
\myqed
\end{definition}

In words, no two sequences of codewords of $ \mathcal C $ can produce the
same channel output, and hence there is no confusion about which sequence
was sent.
Note that we demand the distinguishability of \emph{sequences of codewords},
rather that just of \emph{codewords}.
This is necessary in the delay channels.
To illustrate this, let $ {\mathcal C} = \{ 000, 100, 001 \} $ be a code
of length three for the DTPC($ 1, 1 $), introducing delays of at most one slot.
Then it is easy to check that no two codewords can produce the same channel
output, but on the other hand $ 001000 \rightsquigarrow 000100 $, and hence
the sequences of codewords $ 001, 000 $ and $ 000, 100 $ are confusable.
$ {\mathcal C} $ is therefore not a zero-error code.

This problem can easily be circumvented by simply padding each codeword with
$ K $ zeros (empty slots, in the original terminology).
Empty slots at the end of each codeword serve to ``catch'' the particles that
are (potentially) sent in the preceding slots and are (potentially) delayed
in the channel.
In this way these particles do not interfere with the following codeword.

\begin{definition}
\label{simple}
  A code $ {\mathcal C}(n) $ for the DTPC($ N, K $) is said to be \emph{zero-padded}
if all of its codewords end with $ \min\{ n, K \} $ zeros.
\myqed
\end{definition}

Clearly, a zero-padded code is zero-error if and only if for every two distinct
\emph{codewords} $ {\bf x} $, $ {\bf y} $, there exists no sequence $ \bf z $ with
$ {\bf x} \rightsquigarrow {\bf z} $ and $ {\bf y} \rightsquigarrow {\bf z} $.%

\begin{definition}
  The rate of a code $ {\mathcal C}(n) $ for the DTPC($ N, K $) is defined
as $ \frac{1}{n} \log |{\mathcal C}(n)| $.
The zero-error capacity of the DTPC is the supremum of the rates of all
zero-error codes for this channel.
The base of $ \log $ is assumed to be $ 2 $ and hence the rates and capacities
are expressed in bits per time slot.
\myqed
\end{definition}

It is easy to show that this supremum is equal to the $ \limsup $ of the
rates of the largest zero-error codes for the DTPC.
Furthermore, when considering the capacity of the DTPC($ N, K $), there is no
loss in generality to restrict oneself to zero-padded codes, because padding with
a constant number of zeros does not affect the code rate in the asymptotic sense.

\section{Optimal zero-error codes for the DTPC}
\label{sec:codes}

In this section we give two constructions of optimal zero-padded zero-error
codes for the DTPC.
The results have similar flavor to those obtained for some other types of
combinatorial channels, e.g., \cite{ahlswede, kobayashi, zhang, ahlswede2}.

\subsection{Recursive construction}
\label{sec:recursive}

The claim that follows establishes a general property of zero-padded zero-error
codes for the DTPC($ N, K $), from which the construction of optimal codes will
follow in a straightforward way.
It states that such codes can, without loss of generality, be assumed to contain
only codewords with prefixes $ N $ and $ i \circ 0^K $, $ i \in \{ 0, 1, \ldots, N-1 \} $.

\begin{proposition}
\label{CNKoptimal}
  Let $ {\mathcal C}(n) $, $ n > K $, be a zero-padded zero-error code for the DTPC($ N, K $).
Then there exists a zero-padded zero-error code $ {\mathcal D}(n) $ of the same size,
and such that:
\begin{equation}
\label{eq:wlogcode}
  {\mathcal D}
       = {\mathcal D}^{\sml{N}}  \, \cup  \bigcup_{i=0}^{N-1} {\mathcal D}^{i \, \circ \, 0^K}  .
\end{equation}
\end{proposition}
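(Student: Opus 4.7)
\emph{Proof plan.} My plan is to transform $\mathcal{C}$ into $\mathcal{D}$ by iteratively applying a local \emph{particle-pull} exchange. Call a codeword $c = c_1 \cdots c_n$ \emph{non-canonical} if $c_1 < N$ and at least one of $c_2, \ldots, c_{K+1}$ is positive, and let $l^\ast \in \{2,\ldots,K+1\}$ be the smallest such index. Define
$\tau(c) := (c_1 + 1)\,c_2 \cdots c_{l^\ast-1}\,(c_{l^\ast} - 1)\,c_{l^\ast+1}\cdots c_n$, i.e., pull one particle from slot $l^\ast$ back to slot~$1$. This is well-defined because $c_1 + 1 \leq N$, and the nonnegative integer $\sum_{l=2}^{K+1} c_l$ strictly decreases under $\tau$, so iterating terminates at a codeword of the form in~(\ref{eq:wlogcode}). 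Zero-padding is preserved since only the first $K+1$ slots are touched and $n > K$.

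The real content is to argue that $\mathcal{C}' := (\mathcal{C}\setminus\{c\}) \cup \{\tau(c)\}$ is again a zero-padded zero-error code of the same size. First I would observe that $c$ and $\tau(c)$ are mutually confusable: both produce every output where the moved particle lands somewhere in the overlap $[l^\ast, K+1]$. Hence $\tau(c) \notin \mathcal{C}$ and the exchange is genuine. The key claim is then: for every $c'' \in \mathcal{C} \setminus \{c\}$, $\tau(c)$ and $c''$ are not confusable.

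To prove this I would use cumulative sums: writing $C(j) := c_1 + \cdots + c_j$ and analogously $C'(j)$, $W(j)$, $D(j)$, a standard flow / Hall-type argument gives $x \rightsquigarrow w$ iff $W(j) \leq X(j) \leq W(j+K)$ for all $j \geq 0$. Suppose, for contradiction, $c'' \rightsquigarrow w$ and $\tau(c) \rightsquigarrow w$; we want $c \sim c''$. If already $c \rightsquigarrow w$ we are done, so assume $w \in \mathrm{Out}(\tau(c)) \setminus \mathrm{Out}(c)$. Since $C'(j) - C(j) = \mathbf{1}[j < l^\ast]$, the only inequality of $c$ that $w$ can fail is a lower bound at some $j < l^\ast$, and necessarily by exactly one, i.e., $W(j) = C(j) + 1$. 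I would then pick the largest $i_0 \leq j$ with $w_{i_0} \geq 1$, and form $\tilde{w}$ from $w$ by transferring one unit of mass from slot $i_0$ to slot $l^\ast$. This is a shift of $l^\ast - i_0 \leq K$, well within the delay budget, and a check of the Hall inequalities yields $\tilde{w} \in \mathrm{Out}(c) \cap \mathrm{Out}(c'')$, giving $c \sim c''$ and the desired contradiction.

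The hard part will be that last verification: tracking four Hall-type inequalities (upper and lower bounds, each for $c$ and for $c''$) across the slots $[i_0, l^\ast - 1]$ affected by the transfer. The driving fact is $l^\ast \leq K+1$: it keeps every required shift inside the delay budget and ensures that the modified cumulative sum $\tilde{W}(k) = W(k) - \mathbf{1}[i_0 \leq k < l^\ast]$ only deviates from $W$ at indices where no inequality can tighten into a violation. The sole delicate spot is the bound $D(0) \leq \tilde{W}(K)$ at $k=0$, which is rescued by $W(K) \geq W(j) \geq 1$.
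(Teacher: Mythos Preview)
Your proposal is correct. The iterative one-particle pull, together with the Hall/flow characterization $x \rightsquigarrow w \iff W(j)\le X(j)\le W(j+K)$, does the job; your case analysis for $\tilde w$ checks out (note that because $c_2=\cdots=c_{l^\ast-1}=0$, the violated set is necessarily an interval $[j_0,l^\ast-1]$ with $w_{j_0}\ge 1$, so your $i_0$ is unambiguously $j_0$, which is what makes the ``$k<i_0$'' range safe).

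The paper takes a different, more global route. Instead of moving one particle at a time and re-verifying after each move, it transforms every codeword in one shot according to the \emph{weight} $q$ of its first $K{+}1$ slots: if $q<N$ the prefix becomes $q\circ 0^K$, if $q\ge N$ it becomes $N\circ\tilde{\mathbf u}$ with $\tilde u_i\le u_i$. The key lemma there is simply that any two codewords sharing the same suffix and the same prefix-weight $q$ are confusable, because both can produce $0^K\circ q\circ \mathbf v$; this single observation yields injectivity of the map immediately, and the zero-error property of the image is declared ``not difficult'' without the cumulative-sum bookkeeping you carry out. What the paper's approach buys is brevity and a reusable invariant (prefix-weight plus suffix) that also drives the direct construction later; what your approach buys is a fully explicit verification and a general tool (the Hall characterization of $\mathrm{Out}(x)$) that the paper never states. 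Either is fine for this proposition; if you want to align with the paper you could replace your particle-by-particle surgery with the one-line observation that $\mathbf u\circ\mathbf v \rightsquigarrow 0^K\circ wt(\mathbf u)\circ\mathbf v$ and argue from there.
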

\begin{proof}
  Let $ {\mathcal C}(n) $ be a zero-padded zero-error code for the DTPC($ N, K $).
We will construct $ {\mathcal D}(n) $ by removing the codewords of $ {\mathcal C}(n) $
that do not satisfy the desired form, and add the corresponding codewords that do.
For any codeword of $ {\mathcal C}(n) $ of the form
$ {\bf x} = {\bf u} \circ {\bf v} $, where $ {\bf u} = u_{1} \cdots u_{K+1} $ is
of length $ K + 1 $ and weight $ wt({\bf u}) = q $, we let the corresponding
codeword $ \tilde{\bf x} $ of $ {\mathcal D}(n) $ be specified as follows:
If $ q < N $, then $ \tilde{\bf x} =  q \circ 0^K \circ {\bf v} $,
while if $ q \geq N $, then $ \tilde{\bf x} = N \circ \tilde{\bf u} \circ {\bf v} $,
where $ \tilde{\bf u} = \tilde{u}_{2} \cdots \tilde{u}_{K+1} $ is some sequence of
length $ K $ and weight $ q - N $ satisfying $ \tilde{u}_i \leq u_i $,
$ i = 2, \ldots, K+1 $
%, and $ \sum_{i=2}^{K+1} (u_i - \tilde{u}_i) = N - u_1 $.
(in other words, the prefix of $ \tilde{\bf x} $ is in the latter case constructed
from $ {\bf u} $ by removing $ N - u_1 $ of its particles from slots $ 2, \ldots, K+1 $,
and placing them in the first slot, together with the $ u_1 $ particles that are already
there).
Thus, we can write $ {\mathcal D}(n) = \{ \tilde{\bf x} : {\bf x} \in {\mathcal C}(n) \} $.
It is now not difficult to argue that $ |{\mathcal D}(n)| = |{\mathcal C}(n)| $
and that the fact that $ {\mathcal C}(n) $ is a zero-padded zero-error code implies
that $ {\mathcal D}(n) $ is such a code too.
The key observation is that $ {\mathcal C}(n) $ cannot contain two distinct
codewords of the form $ {\bf x}_1 = {\bf u}_1 \circ {\bf v} $ and
$ {\bf x}_2 = {\bf u}_2 \circ {\bf v} $, where the prefixes $ {\bf u}_1 $,
$ {\bf u}_2 $ are of length $ K + 1 $ and have the same weight,
that is $ wt({\bf u}_1) = wt({\bf u}_2) = q $.
This is because $ {\mathcal C}(n) $ is zero-error, and clearly
$ {\bf x}_1 \rightsquigarrow 0^K \circ q \circ {\bf v} $ and
$ {\bf x}_2 \rightsquigarrow 0^K \circ q \circ {\bf v} $.
\end{proof}

\begin{lemma}
\label{zeroerroriffNK}
  Let $ {\mathcal D}(n) $, $ n > K $, be a zero-padded code of the form
\eqref{eq:wlogcode} for the DTPC($ N, K $).
Then $ {\mathcal D}(n) $ is zero-error if and only if $ {\mathcal D}^{\underline{\sml{N}}}(n-1) $
and $ {\mathcal D}^{\underline{i \, \circ \, 0^K}}(n-K-1) $, $ 0 \leq i < N $, are all zero-error.
\end{lemma}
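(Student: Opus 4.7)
My plan is to prove each direction separately, with a case analysis on the decomposition \eqref{eq:wlogcode} doing the work in the harder direction. The forward implication is immediate by contraposition: if two codewords $\mathbf y_1, \mathbf y_2 \in \mathcal D^{\underline{N}}(n-1)$ produce a common output $\mathbf w$, then prepending $N$ to both inputs and giving delay $0$ to every slot-$1$ particle yields $N \circ \mathbf y_\ell \rightsquigarrow N \circ \mathbf w$ for both $\ell$, contradicting zero-error of $\mathcal D(n)$. The same prepend-with-delay-$0$ device handles $\mathcal D^{\underline{i \circ 0^K}}(n-K-1)$ by prepending $i \circ 0^K$ instead.

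For the reverse implication I assume all sub-codes are zero-error and, given two distinct $\mathbf x_1, \mathbf x_2 \in \mathcal D(n)$, rule out a common output $\mathbf z$ by case analysis on which parts of \eqref{eq:wlogcode} the codewords come from. Three of the four sub-cases are routine. When the two codewords belong to different prefix classes (one in $\mathcal D^N$ and one in $\mathcal D^{i \circ 0^K}$, or two with distinct $i,j$-prefixes), a weight count on the first $K+1$ output slots suffices: a codeword $i \circ 0^K \circ \mathbf y'$ puts weight exactly $i$ there, because the $K$ trailing zeros block any other input slot from contributing, while a codeword $N \circ \mathbf y$ puts weight at least $N$ there since the $N$ slot-$1$ particles cannot escape this window; these are incompatible for $i<N$ and for $i\ne j$. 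When both codewords share the same $i \circ 0^K$ prefix, the same decoupling splits $\mathbf z$ into its first $K+1$ slots (a distribution of the $i$ prefix particles, independent of the suffix) and its last $n-1$ slots (the output of the suffix in isolation), so a shared $\mathbf z$ forces $\mathbf y_1, \mathbf y_2 \in \mathcal D^{\underline{i \circ 0^K}}$ to share an output, violating zero-error of that sub-code.

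The main obstacle is the remaining sub-case, in which both codewords lie in $\mathcal D^N$ and the $N$-prefix is not shielded by trailing zeros, so the slot-$1$ contribution can bleed into output slots $2,\ldots,K+1$ and the above decoupling fails. To handle it I plan to appeal to the standard prefix-sum characterization of $\rightsquigarrow$: for nonnegative integer sequences $\mathbf x$ and $\mathbf z$ with equal totals one has $\mathbf x \rightsquigarrow \mathbf z$ iff $Z(k) \leq X(k) \leq Z(k+K)$ for every $k$, where $X(\cdot), Z(\cdot)$ are cumulative sums (Hall-on-a-line / integer transportation on the bipartite delay graph). Consequently the existence of a common $\mathbf z$ for $\mathbf x_1,\mathbf x_2$ reduces to the joint inequality $\max\{X_1(k-K),X_2(k-K)\} \leq \min\{X_1(k),X_2(k)\}$ for all $k$, plus equal totals. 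Writing $\mathbf x_\ell = N \circ \mathbf y_\ell$ and substituting $X_\ell(k) = N + Y_\ell(k-1)$ for $k \geq 1$, the additive $N$'s cancel on both sides and the joint inequality collapses to the analogous one for $\mathbf y_1, \mathbf y_2$, while the equal-totals condition becomes $wt(\mathbf y_1) = wt(\mathbf y_2)$. Hence confusability of $\mathbf x_1, \mathbf x_2$ within $\mathcal D^N$ is equivalent to confusability of $\mathbf y_1, \mathbf y_2$ within $\mathcal D^{\underline{N}}$, and zero-error of the latter closes the final case.
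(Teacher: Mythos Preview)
Your proposal is correct and follows the same case-analysis skeleton as the paper (cross-prefix incompatibility via the weight of the first $K{+}1$ output slots; within a prefix class, reduce to the sub-code). The paper's own proof is extremely terse: it simply asserts that zero-error of $\mathcal D^{\underline{N}}$ implies zero-error of $\mathcal D^{N}$ and calls the remaining direction ``easy,'' without isolating the $N$-prefix case as the delicate one. You do isolate it and handle it via the cumulative-sum (Hall / integer-transportation) characterization of $\rightsquigarrow$, which is a more explicit and systematic device than anything the paper writes down; the benefit is that once you have the inequality $\max\{X_1(k-K),X_2(k-K)\}\le \min\{X_1(k),X_2(k)\}$, the additive $N$ cancels cleanly. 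An alternative, more elementary route to the same step (closer in spirit to what the paper probably has in mind) is a swap argument: in any realization of $N\circ\mathbf y_\ell \rightsquigarrow \mathbf z$, one can repeatedly exchange destinations so that the $N$ slot-$1$ particles occupy the $N$ earliest output positions in $\mathbf z$ (each swap stays within the delay bound because a slot-$1$ particle lands in $\{1,\ldots,K{+}1\}$), after which deleting those $N$ particles from input and output yields the \emph{same} residual $\mathbf z'$ for both $\ell$, hence $\mathbf y_1,\mathbf y_2$ share an output.

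One small slip: in the same-$(i\circ 0^K)$ case you write ``its last $n-1$ slots''; this should be the last $n-K-1$ slots.
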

\begin{proof}
  If $ {\mathcal D}^{\underline{\sml{N}}} $ and $ {\mathcal D}^{\underline{i \, \circ \, 0^K}} $
are zero-error, then so are $ {\mathcal D}^{\sml{N}} $ and $ {\mathcal D}^{i \, \circ \, 0^K} $.
Observe also that no two sequences $ \bf x $, $ \bf y $, such that $ \bf x $ has prefix
$ i \circ 0^K $ and $ \bf y $ either $ j \circ 0^K $ or $ N $ (where
$ i, j \in \{ 0, 1, \ldots, N-1 \} $, $ i \neq j $), can produce the same channel output,
implying that $ {\mathcal D} $ is zero-error.
The opposite direction is also easy.
\end{proof}

The above claims imply that an optimal zero-padded zero-error code of length $ n $ for
the DTPC($ N, K $) can be constructed recursively from the codes of length $ n-1 $ and $ n-K-1 $.
To start the recursion, optimal zero-padded zero-error codes of length $ j \in \{ 0, \ldots, K \} $
are needed, which are trivially $ \{ 0^j \} $.
%(where $ 0^j $ is the sequence of $ j $ zeros, an empty sequence if $ j = 0 $).

\begin{theorem}
\label{th:CNKoptimal}
  The largest zero-padded zero-error code for the DTPC($ N, K $), denoted $ {\mathcal C}_\sml{N,K} $,
is given by:
\begin{equation}
\label{eq:optimalcode}
 \begin{aligned}
 {\mathcal C}_\sml{N,K}(n)
    = &\big( N \circ {\mathcal C}_\sml{N,K}(n-1) \big) \, \cup  \\
      &\bigcup_{i=0}^{N-1} \big( i \circ 0^K \circ {\mathcal C}_\sml{N,K}(n-K-1) \big) ,
 \end{aligned}
\end{equation}
for $ n > K $, and $ {\mathcal C}_\sml{N,K}(n) = \{ 0^n \} $ for $ 0 \leq n \leq K $.
\hfill \IEEEQED
\end{theorem}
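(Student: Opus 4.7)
My plan is to prove the theorem by induction on the length $n$. The base case $0 \le n \le K$ is immediate from the definition of a zero-padded code: every codeword of length $n \le K$ must end in $\min\{n,K\}=n$ zeros, which forces the unique codeword to be $0^n$, and this singleton is trivially zero-error.

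For the inductive step, fix $n > K$ and assume the theorem for all smaller lengths. I would handle the two directions separately. For the \emph{upper bound}, take any zero-padded zero-error code $\mathcal{C}(n)$. Proposition \ref{CNKoptimal} supplies a zero-padded zero-error code $\mathcal{D}(n)$ of the same cardinality that has the block structure \eqref{eq:wlogcode}. The prefixes $N$ and $i\circ 0^K$ (for $0\le i<N$) are pairwise incompatible, so the union in \eqref{eq:wlogcode} is disjoint, giving
\[
  |\mathcal{D}(n)| = \bigl|\mathcal{D}^{\underline{N}}(n-1)\bigr| + \sum_{i=0}^{N-1} \bigl|\mathcal{D}^{\underline{i \circ 0^K}}(n-K-1)\bigr|.
\]
Lemma \ref{zeroerroriffNK} then says every prefix-stripped subcode appearing here is zero-error, and since stripping a prefix from a codeword ending in $K$ zeros still leaves a codeword ending in $\min\{\cdot,K\}$ zeros, those subcodes are zero-padded as well. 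The induction hypothesis applied to each subcode yields $|\mathcal{C}(n)|=|\mathcal{D}(n)| \le |\mathcal{C}_{N,K}(n-1)| + N\cdot|\mathcal{C}_{N,K}(n-K-1)|$, which is exactly the cardinality of the right-hand side of \eqref{eq:optimalcode}.

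For the \emph{lower bound} (i.e., showing that \eqref{eq:optimalcode} actually defines a valid zero-padded zero-error code of that cardinality), I would argue in the forward direction: by the inductive hypothesis both $\mathcal{C}_{N,K}(n-1)$ and $\mathcal{C}_{N,K}(n-K-1)$ are zero-padded zero-error, so Lemma \ref{zeroerroriffNK} guarantees that their assembly on the right-hand side of \eqref{eq:optimalcode} is zero-error, while the $K$ trailing zeros of each component codeword are preserved by prepending $N$ or $i\circ 0^K$. Cardinality follows by additivity, completing the induction. The main bookkeeping obstacle I expect is the boundary range $K < n \le 2K+1$, where the inner code $\mathcal{C}_{N,K}(n-K-1)$ has length at most $K$: here I need to verify that the last $n-K-1$ positions of any zero-padded codeword of length $n$ lie entirely within its final $K$ zeros, so that the stripped subcode collapses to $\{0^{n-K-1}\}$ and cleanly matches the base case.
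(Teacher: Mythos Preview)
Your proposal is correct and follows exactly the route the paper intends: the paper presents Theorem \ref{th:CNKoptimal} with only a \IEEEQED\ box, relying on the sentence immediately preceding it (``The above claims imply that an optimal zero-padded zero-error code of length $n$ \ldots\ can be constructed recursively from the codes of length $n-1$ and $n-K-1$''), and your induction on $n$ using Proposition \ref{CNKoptimal} for the upper bound and Lemma \ref{zeroerroriffNK} in both directions is precisely the argument being left to the reader. Your boundary check for $K<n\le 2K+1$ is the only detail the paper does not even hint at, and you have identified and resolved it correctly.
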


In the following subsection we will describe a different, perhaps more intuitive
construction of the codes $ {\mathcal C}_\sml{N,K} $.

Theorem \ref{th:CNKoptimal} implies that the cardinalities of the codes $ {\mathcal C}_\sml{N,K} $
satisfy the recurrence relation:
\begin{equation}
 \left| {\mathcal C}_\sml{N,K}(n) \right|
     = \left| {\mathcal C}_\sml{N,K}(n-1) \right| + N \left| {\mathcal C}_\sml{N,K}(n-K-1) \right| ,
\end{equation}
with initial conditions $ \left| {\mathcal C}_\sml{N,K}(n) \right| = 1 $, $ 0 \leq n \leq K $,
which further implies that:
\begin{equation}
\label{eq:CNKsize}
 \left| {\mathcal C}_\sml{N,K}(n) \right|
     = \sum_{k=0}^{K+1} a_k r_k^n ,
\end{equation}
where $ r_k $ are the (complex) roots of the polynomial $ x^{K+1} - x^K - N $, and
$ a_k $ are (complex) constants.

\begin{remark}
  In the particular case $ N = 1 $, $ K = 1 $, the analysis of the channel amounts to
analyzing binary sequences whose $ 1 $'s are being shifted in the channel by at most
one position to the right (hence, the DTPC($ 1, 1 $) can also be seen as a type of a
``bit-shift'' channel \cite{shamai, krachkovsky}).
In this case, the codes $ {\mathcal C}_\sml{1,1} $ satisfy the relation%
\footnote{\,As one of the referees pointed out, this resembles a well known
characterization of $ F_n $ as the number of binary sequences of length $ n - 1 $
with no consecutive ones.
Such a set of sequences, $ {\mathcal S}(n) $, obeys the recursion
$ {\mathcal S}(n) = \big( 0 \circ {\mathcal S}(n-1) \big) \cup \big( 10 \circ {\mathcal S}(n-2) \big) $,
with $ {\mathcal S}(0) = \{ \varnothing \} $, $ {\mathcal S}(1) = \{ 0, 1 \} $.}:
\begin{equation}
  {\mathcal C}_\sml{1,1}(n) = \big( 1 \circ {\mathcal C}_\sml{1,1}(n-1) \big) \cup \big( 00 \circ {\mathcal C}_\sml{1,1}(n-2) \big) ,
\end{equation}
with $ {\mathcal C}_\sml{1,1}(0) = \{ \varnothing \} $, $ {\mathcal C}_\sml{1,1}(1) = \{ 0 \} $,
which implies that
$ |{\mathcal C}_\sml{1,1}(n)| = |{\mathcal C}_\sml{1,1}(n-1)| + |{\mathcal C}_\sml{1,1}(n-2)| $,
with $ |{\mathcal C}_\sml{1,1}(0)| = |{\mathcal C}_\sml{1,1}(1)| = 1 $.
In other words, $ ( |{\mathcal C}_\sml{1,1}(n)| ) $ is the Fibonacci sequence\footnote{\,The
name Fibonacci code would thus be appropriate here, but it has already been used in
some other contexts \cite{kautz, zhang}.}  $ ( F_n ) $.
\myqed
\end{remark}

\subsection{Direct construction}

Let $ {\mathcal D}_\sml{N,K}(n) $ be the code defined by the following procedure.
First enumerate in the inverse lexicographic order all sequences of length $ n $
over $ \{ 0, 1, \ldots, N \} $ ending with $ \min \{ n, K \} $ zeros (so that,
for $ n > K $, the first sequence on the list is $ N^{n-K} \circ 0^K $, the
second one is $ N^{n-K-1} \circ (N-1) \circ 0^K $, etc.; see Table \ref{tab:code21}).
Then repeat the following step until there are no more sequences to process:
Select the first sequence on the list that has not been processed, call it $ \bf x $,
to be a codeword, and then exclude all sequences $ \bf y $ such that
$ {\bf x} \rightsquigarrow {\bf y} $.
Table \ref{tab:code21} illustrates the construction for $ N = 2 $, $ K = 1 $
(only the codewords are listed to save space).

{
\renewcommand{\arraystretch}{0.95}
\begin{table}[h]
\centering
\caption{Zero-error codes of length up to $ 5 $ for the DTPC($ 2, 1 $).
         Cardinality of the codes is shown in the rightmost column.}
 \begin{tabular}{ | c c c c c r }
  \multicolumn{1}{|c}{2} & \multicolumn{1}{|c}{2} & \multicolumn{1}{|c}{2} & \multicolumn{1}{|c}{2} & \multicolumn{1}{|c}{0}  &  1  \\ \cline{5-6}
  \multicolumn{1}{|c}{2} & \multicolumn{1}{|c}{2} & \multicolumn{1}{|c}{2} & \multicolumn{1}{|c}{1} & 0                       &     \\
  \multicolumn{1}{|c}{2} & \multicolumn{1}{|c}{2} & \multicolumn{1}{|c}{2} & \multicolumn{1}{|c}{0} & 0                       &  3  \\ \cline{4-6}
  \multicolumn{1}{|c}{2} & \multicolumn{1}{|c}{2} & \multicolumn{1}{|c}{1} &                     0  & 0                       &     \\
  \multicolumn{1}{|c}{2} & \multicolumn{1}{|c}{2} & \multicolumn{1}{|c}{0} &                     0  & 0                       &  5  \\ \cline{3-6}
  \multicolumn{1}{|c}{2} & \multicolumn{1}{|c}{1} & 0                      &                     2  & 0                       &     \\
  \multicolumn{1}{|c}{2} & \multicolumn{1}{|c}{1} & 0                      &                     1  & 0                       &     \\
  \multicolumn{1}{|c}{2} & \multicolumn{1}{|c}{1} & 0                      &                     0  & 0                       &     \\
  \multicolumn{1}{|c}{2} & \multicolumn{1}{|c}{0} & 0                      &                     2  & 0                       &     \\
  \multicolumn{1}{|c}{2} & \multicolumn{1}{|c}{0} & 0                      &                     1  & 0                       &     \\
  \multicolumn{1}{|c}{2} & \multicolumn{1}{|c}{0} & 0                      &                     0  & 0                       &  11 \\ \cline{2-6}
  \multicolumn{1}{|c}{1} & 0                      & 2                      &                     2  & 0                       &     \\
  \multicolumn{1}{|c}{1} & 0                      & 2                      &                     1  & 0                       &     \\
  \multicolumn{1}{|c}{1} & 0                      & 2                      &                     0  & 0                       &     \\
  \multicolumn{1}{|c}{1} & 0                      & 1                      &                     0  & 0                       &     \\
  \multicolumn{1}{|c}{1} & 0                      & 0                      &                     0  & 0                       &     \\
  \multicolumn{1}{|c}{0} & 0                      & 2                      &                     2  & 0                       &     \\
  \multicolumn{1}{|c}{0} & 0                      & 2                      &                     1  & 0                       &     \\
  \multicolumn{1}{|c}{0} & 0                      & 2                      &                     0  & 0                       &     \\
  \multicolumn{1}{|c}{0} & 0                      & 1                      &                     0  & 0                       &     \\
  \multicolumn{1}{|c}{0} & 0                      & 0                      &                     0  & 0                       &  21 \\ \cline{1-6}
 \end{tabular}
\label{tab:code21}
\end{table}
}

%In practice, one would probably want to exclude the sequence $ 0^n $ ($ n $ empty
%slots) from the code because it slightly complicates the communication protocol.
%The effect of this on the code rate and the analysis of the zero-error capacity is
%clearly insignificant.

% The following claim states that this construction is in fact equivalent to the
% recursive construction from Theorem \ref{th:CNKoptimal}.

\begin{proposition}
\label{th:direct}
  $ {\mathcal D}_\sml{N, K}(n) = {\mathcal C}_\sml{N, K}(n) $ for every $ n \in \N $.
\end{proposition}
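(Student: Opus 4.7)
The plan is to prove the identity $\mathcal{D}_\sml{N,K}(n) = \mathcal{C}_\sml{N,K}(n)$ by strong induction on $n$. The base case $n \leq K$ is trivial since both sides equal $\{0^n\}$. For the inductive step I would not trace the greedy algorithm sequence by sequence, but instead reduce everything to the following \emph{exclusion lemma}: for every zero-padded $\mathbf{y} \in \{0, \ldots, N\}^n$ with $\mathbf{y} \notin \mathcal{C}_\sml{N,K}(n)$, there exists a codeword $\mathbf{c} \in \mathcal{C}_\sml{N,K}(n)$ that is strictly larger than $\mathbf{y}$ in the inverse lexicographic order and satisfies $\mathbf{c} \rightsquigarrow \mathbf{y}$. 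Granting this lemma, the proposition follows by induction on the processing order of the greedy procedure: at each step either the current $\mathbf{y}$ lies in $\mathcal{C}_\sml{N,K}(n)$, in which case the zero-error property of $\mathcal{C}_\sml{N,K}$ (already established by Theorem~\ref{th:CNKoptimal}) prevents any previously selected codeword from reaching $\mathbf{y}$, forcing the procedure to select it; or $\mathbf{y}$ lies outside $\mathcal{C}_\sml{N,K}(n)$, in which case the exclusion lemma furnishes a covering codeword that has already been selected by the inductive invariant and excludes $\mathbf{y}$.

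I would prove the exclusion lemma also by induction on $n$, using the structural recursion \eqref{eq:optimalcode} to split into cases. If $\mathbf{y} = N \circ \mathbf{y}'$ with $\mathbf{y}' \notin \mathcal{C}_\sml{N,K}(n-1)$, then the inductive lemma gives a codeword $\mathbf{c}' \in \mathcal{C}_\sml{N,K}(n-1)$ reaching $\mathbf{y}'$, and $\mathbf{c} := N \circ \mathbf{c}'$ works by keeping the initial $N$ particles in place. The case $\mathbf{y} = i \circ 0^K \circ \mathbf{y}''$ with $i < N$ and $\mathbf{y}'' \notin \mathcal{C}_\sml{N,K}(n-K-1)$ is analogous. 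The combinatorial case is $\mathbf{y} = i \circ \mathbf{w} \circ \mathbf{v}$ with $i < N$ and $\mathbf{w} \in \{0, \ldots, N\}^K \setminus \{0^K\}$; writing $q := wt(\mathbf{w})$, I would split on the sign of $i + q - N$. When $i + q \geq N$, choose any $\mathbf{u} \leq \mathbf{w}$ with $wt(\mathbf{u}) = i + q - N$, use the inductive lemma at length $n-1$ to obtain $\mathbf{c}' \in \mathcal{C}_\sml{N,K}(n-1)$ with $\mathbf{c}' \rightsquigarrow \mathbf{u} \circ \mathbf{v}$, and take $\mathbf{c} := N \circ \mathbf{c}'$; reachability of $\mathbf{y}$ is witnessed by the routing that keeps $i$ of the initial $N$ particles in slot $0$ and delays the remaining $w_j - u_j$ of them into slot $j$ for each $1 \leq j \leq K$, while $\mathbf{c}'$ reproduces $\mathbf{u} \circ \mathbf{v}$ on the subsequent slots. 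When $i + q < N$, set $i' := i + q$ (so $i < i' < N$), apply the inductive lemma to obtain $\mathbf{c}'' \in \mathcal{C}_\sml{N,K}(n-K-1)$ with $\mathbf{c}'' \rightsquigarrow \mathbf{v}$, and take $\mathbf{c} := i' \circ 0^K \circ \mathbf{c}''$; here the $i'$ initial particles split as $i$ staying at slot $0$ and $w_j$ delayed to slot $j$ for $1 \leq j \leq K$, reproducing the first $K+1$ coordinates of $\mathbf{y}$.

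The main obstacle is the third case: one has to verify that the reduced tails $\mathbf{u} \circ \mathbf{v}$ and $\mathbf{v}$ are genuinely zero-padded sequences of the required lengths so that the inductive hypothesis applies (which uses that the trailing zeros of $\mathbf{y}$ propagate through the decomposition and force the tails of $\mathbf{u}$ and $\mathbf{v}$ to vanish as needed), and that the proposed routings of the initial particles respect the delay bound $K$ and yield the correct slot counts. The threshold $i + q \geq N$ versus $i + q < N$ is not cosmetic—it records whether the first $K+1$ coordinates of $\mathbf{y}$ carry enough particles to absorb a full $N$-block at the head of the covering codeword, or whether we must instead resort to a covering codeword beginning with $i + q$ followed by $0^K$. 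These two alternatives align exactly with the two branches of the recursion \eqref{eq:optimalcode}, and once the explicit routings are checked, the exclusion lemma, and with it the proposition, follows.
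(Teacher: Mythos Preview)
Your argument is correct, but it is organized differently from the paper's. The paper does not isolate an ``exclusion lemma''; instead it shows directly that the greedy code $\mathcal{D}_\sml{N,K}$ satisfies the same recursion \eqref{eq:optimalcode} as $\mathcal{C}_\sml{N,K}$. Concretely, the paper partitions the inverse-lexicographic list by first coordinate, observes that the block with first symbol $N$ is processed first and reproduces $\mathcal{D}_\sml{N,K}(n-1)$, and then argues (by downward induction on $i=N-1,\ldots,0$) that any sequence with prefix $i\circ{\bf u}$, $wt({\bf u})>0$, is already excluded: it builds the sequence $\tilde{\bf x}$ of Proposition~\ref{CNKoptimal} (with first symbol $N$) and notes that either $\tilde{\bf x}$ itself, or whatever excluded $\tilde{\bf x}$, also reaches ${\bf x}$. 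Your approach instead proves once and for all that every zero-padded non-codeword is reached from a strictly larger codeword of $\mathcal{C}_\sml{N,K}$, and then reads off $\mathcal{D}=\mathcal{C}$ from this lemma together with the zero-error property. The two case splits you perform (on whether the head is $N$, then on whether $i+q\geq N$) mirror the two branches of \eqref{eq:optimalcode} and are the analogue of the paper's $\tilde{\bf x}$ construction; what you gain is a cleaner, self-contained invariant that does not require tracking the greedy procedure block by block, at the cost of the small bookkeeping you already flag (verifying zero-padding of the tails, and the trivial sub-case where $\mathbf{u}\circ\mathbf{v}$ or $\mathbf{v}$ already lies in $\mathcal{C}_\sml{N,K}$, in which case one simply takes $\mathbf{c}'$ or $\mathbf{c}''$ equal to it).
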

\begin{proof}
Since $ {\mathcal D}_\sml{N, K}(n) = {\mathcal C}_\sml{N, K}(n) = \{ 0^n \} $ for
$ 0 \leq n \leq K $, it is enough to show that $ {\mathcal D}_\sml{N, K}(n) $
satisfy the relation \eqref{eq:optimalcode}.
Observe that
$ {\mathcal D}_{\sml{N,K}}^\sml{N}(n) = N \circ {\mathcal D}_{\sml{N,K}}(n-1) $
because adding a fixed prefix to a set of sequences does not affect the process of
construction and, moreover, the prefix $ N $ puts the sequences on the top of the list.
It is left to prove that
$ {\mathcal D}_{\sml{N,K}}^{i}(n) = i \circ 0^K \circ {\mathcal D}_{\sml{N,K}}(n-K-1) $,
for $ 0 \leq i < N $.
First consider the case $ i = N - 1 $.
Let $ {\bf x} $ be a sequence with prefix $ (N-1) \circ {\bf u} $, where
$ {\bf u} $ is of length $ K $ and strictly positive weight, and construct
$ \tilde{\bf x} $ as in the proof of Proposition \ref{CNKoptimal}.
Now, if $ \tilde{\bf x} $ is a codeword,
%$ \tilde{\bf x} \in {\mathcal D}_{\sml{N,K}}^\sml{N}(n) $,
then $ {\bf x} $ is not because $\tilde{\bf x} \rightsquigarrow {\bf x} $ and so
$ \bf x $ would have been excluded in the process of construction.
On the other hand, if $ \tilde{\bf x} $ is not a codeword, then it has itself
been excluded by some sequence $ \bf y $ that precedes it in the inverse
lexicographic order ($ {\bf y} \rightsquigarrow \tilde{\bf x} $).
But then we also have $ {\bf y} \rightsquigarrow {\bf x} $, and therefore
$ {\bf x} $ is not a codeword either.
% 
% Let $ {\bf x}' $ be a sequence with a prefix $ N \circ {\bf u}' $, but otherwise equal
% to $ {\bf x} $, where $ {\bf u}' = u'_{2} \cdots u'_{K+1} $ is a sequence of
% length $ K $ and weight $ wt({\bf u}) = wt({\bf u}') - 1 $ satisfying $ u'_i \leq u_i $,
% $ i = 2, \ldots, K+1 $ (in other words, the prefix of $ {\bf x}' $ is in this case constructed
% from that of $ \bf x $ by removing one of its particles from slots $ 2, \ldots, K+1 $, and
% placing it in the first slot, together with the $ N-1 $ particles that are already there).
% Now, if $ {\bf x}' $ is a codeword, i.e., $ {\bf x}' \in {\mathcal D}_{\sml{N,K}}^\sml{N}(n) $,
% then $ {\bf x} $ cannot be a codeword because $ {\bf x}' \rightsquigarrow {\bf x} $ and
% so $ \bf x $ would have been excluded in the process of construction.
% On the other hand, if $ {\bf x}' $ is not a codeword, then it has been excluded by some
% sequence, call it $ \bf y $, that precedes it in the inverse lexicographic order, i.e.,
% $ {\bf y} \rightsquigarrow {\bf x}' $.
% But then we would also have $ {\bf y} \rightsquigarrow {\bf x} $, and therefore $ {\bf x} $
% cannot be a codeword either.
We have shown that $ {\mathcal D}_{\sml{N,K}}^\sml{N-1}(n) $ does not contain
codewords having prefix $ (N-1) \circ {\bf u} $, where $ wt({\bf u}) > 0 $,
and hence it can only contain codewords starting with $ (N-1) \circ 0^K $.
Since none of the sequences with this prefix could have been excluded in the
process of construction by a codeword from $ {\mathcal D}_{\sml{N,K}}^\sml{N}(n) $,
it follows that they have been processed independently of the rest of the list
and so
$ {\mathcal D}_{\sml{N,K}}^\sml{N-1}(n) = (N-1) \circ 0^K \circ {\mathcal D}_\sml{N,K}(n-K-1) $.
One can now prove by induction that
$ {\mathcal D}_{\sml{N,K}}^{i}(n) = i \circ 0^K \circ {\mathcal D}_\sml{N,K}(n-K-1) $
for $ i = N-1, N-2, \ldots, 1, 0 $;
the argument is very similar to the above and is omitted.
\end{proof}

\subsection{Decoding algorithm}

The structure of the codes $ {\mathcal C}_\sml{N,K} $, captured by the relation
\eqref{eq:optimalcode}, suggests a very simple algorithm for recovering the
transmitted sequence $ {\bf x} = x_1 \cdots x_{n} \in {\mathcal C}_\sml{N,K}(n) $
from the received sequence $ {\bf y} = y_1 \cdots y_{n} $.
The procedure is as follows:

Set $ {\bf y}^\sml{(1)} = {\bf y} $, and
observe the prefix of $ {\bf y}^\sml{(1)} $ of length $ K + 1 $, namely
$ y_1 \cdots y_{K+1} $, and its weight $ q $.
\begin{itemize}
\item[1.]
If $ q < N $, conclude that $ x_1 \cdots x_{K+1} = q \circ 0^K $
(see \eqref{eq:optimalcode}), and set $ {\bf y}^\sml{(2)} = y_{K+2} \cdots y_n $.
Note that $ {\bf y}^\sml{(2)} $ is the (possible) output of the DTPC($ N, K $) when
the input is the codeword $ x_{K+2} \cdots x_n $ from $ {\mathcal C}_\sml{N,K}(n-K-1) $.
\item[2.]
If $ q \geq N $, conclude that $ x_1 = N $.
If also $ y_1 < N $, this means that some of the particles from the first slot
have been delayed in the channel.
In that case remove $ N - y_1 $ of these particles from slots $ 2, \ldots, K+1 $
(first taking particles from slot $ 2 $, then slot $ 3 $, etc., until $ N - y_1 $
of them are collected) and put them in the first slot.
Then set $ {\bf y}^\sml{(2)} = y'_2 \cdots y'_{K+1} \circ y_{K+2} \cdots y_n $,
where $ y'_2 \cdots y'_{K+1} $ is obtained from $ y_2 \cdots y_{K+1} $ by
removing the particles in the above-described way, i.e., for some
$ k \in \{ 2, \ldots, K + 1 \} $ we have $ y'_i = 0 $ for
$ i \in \{ 2, \ldots, k - 1 \} $,
$ y'_k = \sum_{i=1}^{k} y_i - N \geq 0 $, and
$ y'_i = y_i $ for $ i \in \{ k + 1, \ldots, K + 1 \} $.
Note that $ {\bf y}^\sml{(2)} $ is the (possible) output of the DTPC($ N, K $)
when the input is the codeword $ x_2 \cdots x_n \in {\mathcal C}_\sml{N,K}(n-1) $.
\end{itemize}
The procedure is repeated with $ {\bf y}^\sml{(2)} $ by considering its prefix
of length $ K + 1 $, and so on.

Since at least one symbol of $ {\bf x} $ is determined in every iteration, the
algorithm will terminate in at most $ n $ iterations (in fact, at most $ n - K $
due to the trailing zeros).
% The complexity of the algorithm is therefore \emph{linear} in the length of the
% codeword.

\section{Zero-error capacity of the DTPC}
\label{sec:capacity}

The results of Section \ref{sec:recursive} imply that the capacity of the
DTPC($ N, K $) can be simply found as
$ \lim_{ n \to \infty } \frac{1}{n} \left| {\mathcal C}_\sml{N, K}(n) \right| $,
and by using the fact that the asymptotic behavior of
$ \left| {\mathcal C}_\sml{N, K}(n) \right| $ is determined by the largest (in
modulus) root of the polynomial $ x^{K+1} - x^K - N $ (see \eqref{eq:CNKsize}).
%Denote this root by $ r $.

\begin{lemma}
\label{root}
  The largest (in modulus) root $ r $ of the polynomial $ x^{K+1} - x^K - N $
is real and greater than $ 1 $.
Moreover, if $ K \to \infty $, then $ r \to 1 $.
\end{lemma}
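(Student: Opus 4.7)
The plan is to analyze the polynomial $p(x) = x^{K+1} - x^K - N = x^K(x-1) - N$ directly. Setting $f(x) = x^K(x-1)$, I first note $p(1) = -N < 0$ and $p(x) \to +\infty$ as $x \to +\infty$, so the Intermediate Value Theorem yields at least one real root $r > 1$. For uniqueness on $(1,\infty)$, I would compute $f'(x) = x^{K-1}\bigl((K+1)x - K\bigr)$, which is strictly positive for $x \geq 1$, so $f$ is strictly increasing on $[1,\infty)$ and the equation $f(x) = N$ has a unique solution in $(1,\infty)$.

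Next I would show $r$ dominates in modulus every complex root. Suppose $z \in \mathbb{C}$ satisfies $z^{K+1} = z^K + N$. The triangle inequality gives
\begin{equation}
|z|^{K+1} \leq |z|^K + N, \qquad \text{i.e.,} \qquad |z|^K(|z|-1) \leq N.
\end{equation}
If $|z| \leq 1$, then $|z| < r$ trivially. If $|z| > 1$, then $f(|z|) \leq N = f(r)$; since $f$ is strictly increasing on $[1,\infty)$, this forces $|z| \leq r$. Hence $r$ is indeed the largest-in-modulus root.

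For the asymptotics, I would argue as follows. Fix any $\varepsilon > 0$ and evaluate $f(1+\varepsilon) = (1+\varepsilon)^K \varepsilon$. Since $(1+\varepsilon)^K \to \infty$, there exists $K_0$ such that $f(1+\varepsilon) > N$ whenever $K \geq K_0$. Because $f$ is strictly increasing on $[1,\infty)$ and $f(r) = N$, this means $r < 1 + \varepsilon$ for all $K \geq K_0$. Combined with $r > 1$, this gives $r \to 1$ as $K \to \infty$.

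I do not anticipate a serious obstacle; the only point requiring care is the case split for the modulus argument (distinguishing $|z| \leq 1$ from $|z| > 1$ before invoking monotonicity of $f$), since $f$ is not monotone on all of $\mathbb{R}_{\geq 0}$, only on $[K/(K+1), \infty) \supset [1,\infty)$.
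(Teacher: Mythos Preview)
Your argument is correct and fully self-contained, but it follows a genuinely different route from the paper's. The paper does not carry out the triangle-inequality/monotonicity computation you propose; instead it quotes a general result of Wilf (a Perron--Frobenius-type bound on polynomial roots): for any polynomial $p(x)=\sum c_j x^j$ with $c_0 c_m\neq 0$, all zeros lie in $|x|\le r$, where $r$ is the unique positive real zero of $|c_m|x^m-|c_{m-1}|x^{m-1}-\cdots-|c_0|$. Since $x^{K+1}-x^K-N$ is already of this latter form, the cited theorem immediately gives existence of a unique positive real root $r$ that dominates all others. Your direct computation is essentially the special case of Wilf's proof for this polynomial; what you gain is that no external reference is needed, while the paper gains brevity by outsourcing the work.

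For the limit $r\to 1$, the two arguments coincide in spirit: the paper observes that $r$ is the intersection of the curves $x^K$ and $N(x-1)^{-1}$, which is the same equation $x^K(x-1)=N$ you analyze, and draws the same conclusion. One small remark: your argument yields only $|z|\le r$ for every root $z$, not strict inequality for $z\neq r$. That is all the lemma statement demands, but the paper (via Wilf) actually gets strict dominance, which is what one ultimately wants when reading off the exponential growth rate from \eqref{eq:CNKsize}. You could recover strictness with one extra line: equality in $|z^K+N|=|z|^K+N$ forces $z^K$ real and positive, hence $z$ real and positive, hence $z=r$ by your uniqueness step.
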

\begin{proof}
  The following theorem is proven in \cite[Ch.\ 3, Thm 2]{wilfbook} (see also \cite{wilf}):
If $ p(x) = c_m x^m + c_{m-1} x^{m-1} + \cdots + c_1 x + c_0 $ is an arbitrary
polynomial with complex coefficients, and $ c_0 \cdot c_m \neq 0 $, then all
roots of $ p(x) $ lie in the (complex) circle $ |x| \leq r $, where $ r $ is
the \emph{unique positive real} root of
$ \tilde{p}(x) = |c_m| x^m - |c_{m-1}| x^{m-1} - \cdots - |c_1| x - |c_0| $.
Since our polynomial is precisely of the form $ \tilde{p}(x) $, we conclude
that it has a unique positive real root $ r $, and that all other roots are
smaller in modulus than $ r $.
This root can be found as the point of intersection of the curves $ x^{K} $
and $ N (x-1)^{-1} $ (viewed as real functions).
By analyzing these curves it follows easily that $ r > 1 $ and that $ r \to 1 $
when $ K \to \infty $.
\end{proof}

\begin{theorem}
\label{th:capacityNK}
  The zero-error capacity of the DTPC($ N, K $) is equal to $ \log r $, where $ r $
is the unique positive real root of the polynomial $ x^{K+1} - x^{K} - N $.
\hfill \IEEEQED
\end{theorem}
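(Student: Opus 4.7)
The plan is to combine the recurrence derived in Section \ref{sec:recursive} with Lemma \ref{root} to pin down the exponential growth rate of $|{\mathcal C}_\sml{N,K}(n)|$, and then translate this into a statement about capacity. As already observed after the definition of capacity, restricting to zero-padded codes costs nothing asymptotically, and by Theorem \ref{th:CNKoptimal} the sequence $|{\mathcal C}_\sml{N,K}(n)|$ is the largest such cardinality. Hence the zero-error capacity equals
\begin{equation*}
  C(N,K) \;=\; \limsup_{n \to \infty} \frac{1}{n} \log |{\mathcal C}_\sml{N,K}(n)|,
\end{equation*}
and the task reduces to computing this limit.

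Next, I would invoke the closed-form expression \eqref{eq:CNKsize}, namely $|{\mathcal C}_\sml{N,K}(n)| = \sum_{k=0}^{K+1} a_k r_k^n$, which is the standard solution of the linear recurrence whose characteristic polynomial is $x^{K+1} - x^K - N$. By Lemma \ref{root}, exactly one of the $r_k$ is real and positive, call it $r$, and it strictly dominates all other roots in modulus. Let $a$ denote the coefficient attached to $r$; if $a>0$, then $|{\mathcal C}_\sml{N,K}(n)| = a r^n (1 + o(1))$, and therefore $\frac{1}{n} \log |{\mathcal C}_\sml{N,K}(n)| \to \log r$, which is the desired conclusion.

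The only genuine subtlety is verifying that the coefficient $a$ of the dominant mode does not vanish, since a priori a particular choice of initial conditions could kill it. I would handle this by establishing two matching bounds directly from the recurrence $|{\mathcal C}_\sml{N,K}(n)| = |{\mathcal C}_\sml{N,K}(n-1)| + N |{\mathcal C}_\sml{N,K}(n-K-1)|$. For the upper bound, a routine induction shows $|{\mathcal C}_\sml{N,K}(n)| \le C_1 r^n$ with $C_1$ chosen large enough to cover the initial conditions, using the defining identity $r^{K+1} = r^K + N$. For the lower bound, the same induction with $C_2 = \min_{0 \le j \le K} r^{-j}$ yields $|{\mathcal C}_\sml{N,K}(n)| \ge C_2 r^n$, because the sequence is strictly positive and the recurrence has nonnegative coefficients. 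Together these sandwich bounds give $\frac{1}{n} \log |{\mathcal C}_\sml{N,K}(n)| \to \log r$ without needing to compute $a$ explicitly.

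Putting everything together, the capacity equals $\log r$, where $r$ is the unique positive real root of $x^{K+1}-x^K-N$ guaranteed by Lemma \ref{root}. I expect the main obstacle to be purely expository: making the $\Theta(r^n)$ growth rigorous in a way that does not appeal to uncomputed constants $a_k$, which the two-sided induction sketched above handles cleanly.
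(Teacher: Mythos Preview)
Your proposal is correct and follows essentially the same route as the paper: reduce to the growth rate of $|{\mathcal C}_\sml{N,K}(n)|$ via Theorem~\ref{th:CNKoptimal}, invoke the closed form \eqref{eq:CNKsize}, and appeal to Lemma~\ref{root} for the dominant root. The only difference is that the paper leaves the nonvanishing of the leading coefficient implicit, whereas you make the $\Theta(r^n)$ growth rigorous via the two-sided induction---a welcome addition that fills a gap the paper does not address.
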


The zero-error capacity of the DTPC($ 1, 1 $) is therefore $ \log \phi $, where
$ \phi = (1 + \sqrt{5}) / 2 $.
More generally, the zero-error capacity of the DTPC($ N, 1 $) equals
$ \log \left( \frac{1}{2}(1 + \sqrt{1 + 4N}) \right) $.
Explicit expressions can also be obtained in the following two cases, which are
intuitively clear: The zero-error capacity of the DTPC($ N, 0 $) is $ \log ( N + 1 ) $,
while that of the DTPC($ N, \infty $) (which allows arbitrarily large delays) is
zero.

\begin{figure}[h]
 \centering
  \subfigure[Dependence on $ K $, for $ N = 1, 3, 7, 15, 31, 63 $.]
  {
    \includegraphics[width=\columnwidth,trim=1.5cm 0cm 1cm 0cm]{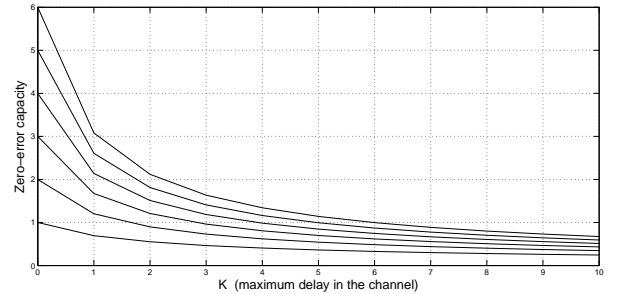}
    \label{fig_capacityN}
  }
  \subfigure[Dependence on $ N $, for $ K = 0, 1, \ldots, 10 $.]
  {
    \includegraphics[width=\columnwidth,trim=1.5cm 0cm 1cm 0cm]{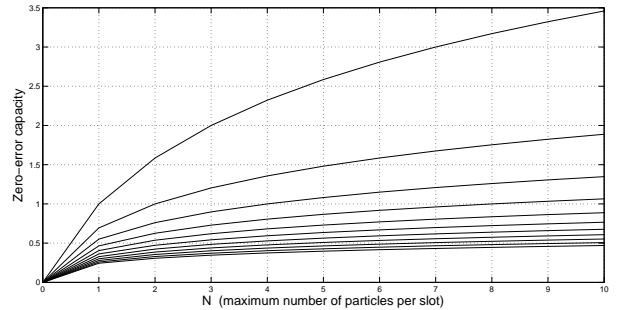}
    \label{fig_capacityK}
  }
\caption{The behavior of the zero-error capacity of the DTPC($ N, K $).}%
\label{fig_capacity}
\end{figure}%

The following proposition states some basic properties of the capacity, regarded
as a function of the channel parameters $ N $ and $ K $.
This function is also illustrated in Figure \ref{fig_capacity}.

\begin{proposition}
  Both $ r $ and $ \log r $ are monotonically increasing concave functions of
$ N $, for fixed $ K $, and monotonically decreasing convex functions of $ K $,
for fixed $ N $.
\end{proposition}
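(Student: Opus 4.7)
The approach will be to exploit the implicit relation $r^{K+1}-r^K=N$ together with the elementary fact that the inverse of a strictly increasing convex function is strictly increasing and concave, and the inverse of a strictly decreasing convex function is strictly decreasing and convex. All four claims will follow from this, applied to suitable parameterizations.

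\emph{Dependence on $N$ (with $K$ fixed).} I will view $N$ as a function of $r\in[1,\infty)$: setting $h(r):=r^{K}(r-1)=r^{K+1}-r^K$, the capacity root is $r(N)=h^{-1}(N)$. Two direct differentiations give
\begin{equation*}
h'(r)=r^{K-1}\bigl[(K+1)r-K\bigr], \qquad h''(r)=K\,r^{K-2}\bigl[(K+1)r-(K-1)\bigr],
\end{equation*}
both strictly positive for $r\geq 1$, so $h$ is strictly increasing and convex. Hence $r(N)$ is strictly increasing and concave. For the capacity $\log r=\log_2 r=:s$, the analogous substitution produces $\tilde h(s):=2^{s(K+1)}-2^{sK}$, whose second derivative equals $(\ln 2)^2\,2^{sK}\bigl[(K+1)^2\,2^s-K^2\bigr]$, which is manifestly positive for $s>0$; hence $\log r$ is likewise increasing and concave in $N$.

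\emph{Dependence on $K$ (with $N$ fixed).} I will extend $K$ to a continuous nonnegative parameter and work with $\rho:=\ln r$, so the defining relation becomes $K\rho+\ln(e^\rho-1)=\ln N$. Implicit differentiation yields
\begin{equation*}
\rho'(K) \;=\; -\frac{\rho(e^\rho-1)}{(K+1)e^\rho-K},
\end{equation*}
which is strictly negative for $\rho>0$, proving monotone decrease. A second differentiation gives an expression of the form $-(P'Q-PQ')/Q^2$; the useful trick will be to eliminate one factor of $\rho'$ using the first-order identity $\rho'\cdot[(K+1)e^\rho-K]=-\rho(e^\rho-1)$. After this substitution and routine cancellation, the inequality $\rho''(K)\geq 0$ reduces to
\begin{equation*}
2(e^\rho-1)\bigl[(K+1)e^\rho-K\bigr]+\rho\, e^\rho \;\geq\; 0,
\end{equation*}
which is evident for $\rho>0$. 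Convexity of $r=e^\rho$ then follows from $r''(K)=e^\rho\bigl[(\rho'(K))^2+\rho''(K)\bigr]\geq 0$, and its monotone decrease from $r'(K)=e^\rho\rho'(K)<0$.

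\emph{Anticipated main difficulty.} The only genuinely delicate step is the second-derivative computation in the $K$-part: the raw expression $P'Q-PQ'$ is unwieldy, but once one factor of $\rho'$ is replaced via the first-order relation, the whole inequality collapses to the displayed nonnegative quantity. All remaining steps are elementary sign checks on polynomial or exponential expressions.
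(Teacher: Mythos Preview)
Your argument is correct. The paper's proof follows essentially the same calculus route---extending $N,K$ to real parameters and checking signs of first and second derivatives---but is much terser: it implicitly differentiates $r^{K+1}-r^K-N=0$ in $N$ to obtain $\dot r_N>0$, $\ddot r_N<0$, and $2^{c(K+1)}-2^{cK}-N=0$ in $K$ to obtain $\dot c_K<0$, $\ddot c_K>0$, then transfers the remaining two claims via the monotonicity and convexity/concavity of $\log$ and $\exp$. Your treatment of the $N$-dependence is a slightly cleaner variant: by writing $N=h(r)$ and checking $h',h''>0$ directly, you avoid the second round of implicit differentiation and read off concavity of $r=h^{-1}$ from the general inverse-function fact. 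Your treatment of the $K$-dependence is the same implicit-differentiation approach as the paper's (carried out in natural logarithms), with the details---in particular the reduction of $\rho''\ge 0$ to the manifestly positive quantity $2(e^\rho-1)\bigl[(K+1)e^\rho-K\bigr]+\rho e^\rho$---actually written out, which the paper leaves to the reader.
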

\begin{proof}
  The function $ r $ is defined implicitly by $ r^{K+1} - r^K - N = 0 $, $ r > 1 $,
and the function $ c = \log r $ by $ 2^{c(K+1)} - 2^{cK} - N = 0 $, $ c > 0 $.
Note that $ r $ and $ c $ are well-defined for all $ N, K \in \mathbb{R}_\sml{+} $,
not necessarily integers.
One can therefore differentiate them with respect to $ N $ and $ K $ and verify
that $ \deriv{r}{N} > 0 $, $ \derivsec{r}{N} < 0 $, $ \deriv{c}{K} < 0 $,
$ \derivsec{c}{K} > 0 $.
The remaining claims follow from the properties of the logarithm and the exponential
function.
\end{proof}

% \section{Concluding remarks}
% 
% We have presented an analysis of the zero-error capacity of a class of timing
% channels defined by two parameters -- maximum number of particles per slot $ N $,
% and maximum delay in the channel $ K $. In the models analyzed in the literature
% regarding discrete-time queues and discrete-time molecular communications, the
% delay is usually assumed unbounded. This is partly due to the mathematical
% tractability of such models, and partly because the maximum delay, even if it
% exists, might be unknown. Needless to say, the codes constructed in this paper
% can also be used in channels introducing arbitrarily large delays.
% In that case, their zero-error property is lost, but by defining them for large
% enough $ K $ the probability of error can be made arbitrarily small.
% 
% As a couple of pointers for future work, we note that the model analyzed in the
% paper can be further extended in various ways. One such extension, motivated
% by some realistic scenarios, is to allow ``losses" (i.e., ``deletions")
% of particles in the channel. In another direction, the assumption on the indistinguishability
% of particles can be dropped and one can consider communication with several
% types of particles. Finally, we note that it might be natural in some cases
% to consider sets of possible delays different than $ \{ 0, \ldots, K \} $.
% For example, an interesting model is obtained by considering discrete-time
% queues in which the \emph{processing} time of packets, rather than their
% \emph{residence} time, is bounded by $ K $.%

\section*{Acknowledgment}

The authors would like to thank the Associate Editor J\'anos K\"orner and the
anonymous reviewers for providing constructive comments and significantly
improving the quality of the manuscript.

% The authors are very grateful to the reviewers for carefully reading the
% manuscript and providing suggestions that have resulted in a clear and
% concise presentation of the results.

\appendices
% you can choose not to have a title for an appendix
% if you want by leaving the argument blank

\section{Restricting the channel output}
\label{app_Nparticlesreceived}

In this section we demonstrate that bounding the number of particles that
can be received in a slot by $ N $ (or by $ N' \geq N $) does not change
the zero-error capacity of the DTPC.
For the purpose of this argument we will refer to the channel with this
additional restriction as the DTPC($ N, K; N $).
To clarify what is meant by the DTPC($ N, K; N $), we emphasize that there is
no ``limiter" in the channel that drops some of the particles if their number
in a slot exceeds $ N $.
As in the DTPC($ N, K $), all particles must arrive at the destination, only
now their delays, in addition to being $ \leq K $, have to be such that the
number of particles at the channel output in every slot is $ \leq N $.
One can perhaps imagine a ``membrane" at the channel output allowing at most
$ N $ particles per slot to pass through.

\begin{proposition}
\label{prop_Nparticlesreceived}
  Any zero-error code for the DTPC($ N, K $) is a zero-error code for the
DTPC($ N, K; N $), and vice versa.
\end{proposition}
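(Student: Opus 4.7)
The plan is to handle the two inclusions separately. The forward direction is immediate: every output of an input ${\bf x}$ realizable in DTPC($N,K;N$) is \emph{a fortiori} realizable in DTPC($N,K$), since the additional restriction only shrinks the set of possible outputs. Hence two input sequences with no common output in DTPC($N,K$) certainly have no common output in DTPC($N,K;N$), and zero-error for DTPC($N,K$) implies zero-error for DTPC($N,K;N$).

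For the converse, I argue by contraposition. Assume distinct sequences of codewords ${\bf x}$, ${\bf y}$ are confusable in DTPC($N,K$), i.e., ${\bf x} \rightsquigarrow {\bf z}$ and ${\bf y} \rightsquigarrow {\bf z}$ for some ${\bf z}$; I construct a common output ${\bf z}'$ satisfying the extra $\leq N$-per-slot constraint, showing the code is not zero-error for DTPC($N,K;N$) either. Define ${\bf z}'$ from ${\bf z}$ alone by a water-filling procedure: scan ${\bf z}$ from left to right, and whenever a slot holds more than $N$ particles, push the excess into the next slot; iterate. Equivalently, ${\bf z}'$ is the unique non-negative integer sequence with $z'_j \leq N$ for every $j$ and with prefix sums $S'_b := \sum_{j\leq b} z'_j = \min(S_b, bN)$, where $S_b := \sum_{j\leq b} z_j$. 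By construction $wt({\bf z}') = wt({\bf z})$.

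The core of the argument is the lemma: whenever ${\bf x} \rightsquigarrow {\bf z}$ in DTPC($N,K$), also ${\bf x} \rightsquigarrow {\bf z}'$ in DTPC($N,K;N$). I would establish it via the standard prefix-sum characterization of realizable outputs (a short FIFO / interval-matching argument): ${\bf x} \rightsquigarrow {\bf z}$ in DTPC($N,K$) if and only if $X_{b-K} \leq S_b \leq X_b$ for every $b$, where $X_b := \sum_{j\leq b} x_j$ and $X_i := 0$ for $i \leq 0$. Given this, I only need to verify the analogous bounds for ${\bf z}'$. The upper bound is immediate, $S'_b \leq S_b \leq X_b$. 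The lower bound follows from $S'_b = \min(S_b, bN) \geq \min(X_{b-K}, bN) = X_{b-K}$, where the last equality uses $X_{b-K} \leq (b-K)N \leq bN$, which in turn uses $x_j \leq N$ for all $j$.

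Applying the lemma with the roles of ${\bf x}$ and ${\bf y}$ swapped yields ${\bf y} \rightsquigarrow {\bf z}'$ in DTPC($N,K;N$) as well, so ${\bf z}'$ is a common output under the restricted channel, completing the contrapositive. I expect the main obstacle to be the prefix-sum characterization of outputs used above; it is routine but indexing errors are easy to make, so I would state and prove it as a short auxiliary lemma (giving an explicit FIFO assignment of particles to slots) before invoking it in the main argument.
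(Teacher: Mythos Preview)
Your overall strategy coincides with the paper's: both prove the nontrivial direction by contraposition, taking a common output $\mathbf{z}$ in DTPC($N,K$) and ``pushing excess particles forward'' slot by slot to obtain a common output obeying the per-slot cap $N$. The paper argues this informally (if $z_i>N$ then some particle was delayed by less than $K$ and can be delayed further), while you try to make it precise via prefix sums.

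There is, however, a genuine error in your closed form. You claim the water-filling sequence has prefix sums $S'_b=\min(S_b,bN)$. But in the situation at hand we always have $S_b\le X_b\le bN$ (the first inequality because delays are nonnegative, the second because $x_j\le N$), so your formula gives $S'_b=S_b$ and hence $\mathbf{z}'=\mathbf{z}$, which need not satisfy $z'_j\le N$ at all. A small example: $N=1$, $K=2$, $\mathbf{x}=(1,1,1,0,0)$, $\mathbf{z}=(0,0,3,0,0)$; your formula returns $\mathbf{z}'=\mathbf{z}$, whereas the actual water-filling yields $(0,0,1,1,1)$.

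The procedure you describe in words is correct; only the ``equivalently'' is false. The true prefix sums after water-filling are
\[
S'_b \;=\; \min_{0\le a\le b}\bigl(S_a+(b-a)N\bigr),
\]
i.e.\ one must minimize over all cut points $a$, not just $a\in\{0,b\}$. With this formula your prefix-sum check does go through: for each $a$ one has $S_a+(b-a)N\ge X_{a-K}+(b-a)N\ge X_{b-K}$, the last step using $X_{b-K}-X_{a-K}=\sum_{i=a-K+1}^{b-K}x_i\le (b-a)N$. So the fix is local: replace the incorrect closed form by the correct one (or, as the paper does, simply argue directly from the procedure that no particle's delay exceeds $K$).
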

\begin{proof}
  Let $ {\bf x} $ and $ {\bf y} $ be two sequences such that they can both
produce $ {\bf z} = z_1 \cdots z_{l} $ at the output of the DTPC($ N, K $).
Then there exists $ {\bf w} = w_1 \cdots w_{l} $ such that
$ {\bf x} \rightsquigarrow {\bf w} $ and $ {\bf y} \rightsquigarrow {\bf w} $
in the DTPC($ N, K; N $), i.e., such that $ w_i \leq N $.
To see this, observe that if $ z_i > N $ for some $ i \in \{ 1, \ldots, l \} $,
then some of these $ z_i $ particles have not been delayed for a maximal number
of slots ($ K $) and could be further delayed.
We can therefore find the desired $ \bf w $ by going through slots $ 1, \ldots, l $,
respectively, and whenever we find that $ z_i' > N $, we move $ z_i' - N $ of these
particles to slot $ i + 1 $, where $ z_i' $ is the sum of $ z_i $ and the number of
particles that were moved from slots $ 1, \ldots, i - 1$ to slot $ i $ during this
procedure.
We conclude that if a code is \emph{not} a zero-error code for the DTPC($ N, K $),
then it is \emph{not} a zero-error code for the DTPC($ N, K; N $) either.
The opposite direction is obvious.
\end{proof}

We note, however, that bounding the number of received particles in a slot by
$ N' < N $ reduces the zero-error capacity because it excludes some sequences
as valid inputs.

%\newpage
%\IEEEtriggeratref{2}

% \begin{IEEEbiographynophoto}{Mladen Kova\v{c}evi\'c}
% obtained his Dipl.-Ing.\ degree in electrical engineering from the University
% of Novi Sad, Serbia, and is currently working towards his PhD degree.
% His research interests include information theory, error-correcting codes,
% computational complexity theory, and applied mathematics.
% \end{IEEEbiographynophoto}
% 
% 
% \begin{IEEEbiographynophoto}{Petar Popovski}
% (S’97--A’98--M’04--SM’10) received Dipl.-Ing. in electrical engineering (1997)
% and Magister Ing. in communication engineering (2000) from Sts. Cyril and
% Methodius University, Skopje, Macedonia, and Ph. D. from Aalborg University,
% Denmark, in 2004.
% He is currently a Professor at Aalborg University.
% He has more than 200 publications in journals, conference proceedings and
% books, and has more than 30 patents and patent applications.
% He has received the Young Elite Researcher award and the SAPERE AUDE career
% grant from the Danish Council for Independent Research.
% He has received six best paper awards, including three from IEEE.
% Dr. Popovski serves on the editorial board of \textsc{IEEE Transactions on
% Communications} and IEEE JSAC Cognitive Radio Series.
% He is a Steering Committee member for \textsc{IEEE Internet of Things} Journal
% and Chair of the ComSoc subcommittee on Smart Grid Communications.
% His research interests are in the broad area of wireless communication and
% networking, communication theory and protocol design.
% \end{IEEEbiographynophoto}

\end{document}